\newtheorem{theorem}{Theorem}
\newtheorem{lemma}{Lemma}
\newtheorem{remark}{Remark}
\providecommand{\norm}[1]{\lVert#1\rVert}
\DeclareMathOperator*{\argmin}{arg\,min}
\title{\LARGE \bf On Event Triggered Tracking for Nonlinear Systems}
\author{Pavankumar Tallapragada and Nikhil Chopra
\thanks{This work was partially supported by Minta Martin Fund and by the National Science Foundation through grant number 0931661.}
\thanks{Pavankumar Tallapragada is with the Department of Mechanical Engineering,
        University of Maryland, College Park, 20742 MD, USA
        {\tt\small pavant@umd.edu}}%
\thanks{N. Chopra is with the Department of Mechanical Engineering and The Institute for Systems Research,
        University of Maryland, College Park, 20742 MD, USA
        {\tt\small nchopra@umd.edu}}%
}
\begin{document}
\maketitle

\begin{abstract}
In this paper we study an event based control algorithm for trajectory tracking in nonlinear systems. The desired trajectory is modelled as the solution of a reference system with an exogenous input and it is assumed that the desired trajectory and the exogenous input to the reference system are uniformly bounded. Given a continuous-time control law that guarantees global uniform asymptotic tracking of the desired trajectory, our algorithm provides an event based controller that not only guarantees uniform ultimate boundedness of the tracking error, but also ensures non-accumulation of inter-execution times. In the case that the derivative of the exogenous input to the reference system is also uniformly bounded, an arbitrarily small ultimate bound can be designed. If the exogenous input to the reference system is piecewise continuous and not differentiable everywhere then the achievable ultimate bound is constrained and the result is local, though with a known region of attraction. The main ideas in the paper are illustrated through simulations of trajectory tracking by a nonlinear system.
\end{abstract}

\section{Introduction}

Traditional computer based control systems rely on periodic sampling of the sensors and computation/execution of the control. The reason for the popularity of this paradigm is a well developed theory and the ease of analysis of such systems. However, such control algorithms may be very inefficient from a computational perspective as the period for sampling and control execution is determined by a worst case analysis and the rate of control execution is independent of the system's state. On the other hand, in event based control systems, timing of control execution is not necessarily periodic and can be state dependent. Thus, event based control is useful in systematically designing controllers that make better use of computational and communication resources in a wide variety of applications such as embedded control systems and decentralized systems (a representative list of references includes \cite{tabuada2007, mazo2011, wang2008, wang2011, dimarogonas2012}).

While there have been some efforts in the past to study event based control systems \cite{arzen1999,astrom2002,astrom2003}, their systematic design for tasks such as stabilization has been undertaken only recently \cite{sandee_phd, tabuada2007, heemels2008, wang2009, wang2010}. Of these, \cite{tabuada2007} has significantly influenced the proposed controller in this paper. In \cite{tabuada2007}, an event-triggering algorithm was proposed that ensures global asymptotic stability as well as a lower bound on the inter-execution times of the control law for general nonlinear systems that are rendered Input-to-State Stable (ISS) with respect to measurement errors by a continuous time controller.

In this paper, we investigate an event triggered control algorithm for trajectory tracking. Tracking a time varying trajectory or even a set-point is of tremendous practical importance in many control applications. In these applications, the goal is to make the state of the system follow a reference or desired trajectory, which is usually specified as an exogenous input to the system. In this paper, the reference trajectory is generated by a reference system. To the best of our knowledge, the majority of the previous works in the event-triggered control literature assumed a state feedback control strategy with no exogenous input, some exceptions being \cite{sandee_phd, heemels2008, wang2009, wang2010, lunze2010, donkers2010, yu2011}, where unknown disturbances appear as exogenous inputs. However, in this paper, we consider exogenous inputs that are available to the controller through measurements, namely the reference trajectory and the input to the reference system.

The \textbf{main contribution} of this paper is the design of event-triggered controllers for trajectory tracking in nonlinear systems, which is a special case of nonlinear systems with exogenous inputs. It is assumed that the reference trajectory and the exogenous input to the reference system are uniformly bounded. Given a nonlinear system and a continuous-time controller that ensures global uniform asymptotic tracking of the desired trajectory, the proposed algorithm provides an event based controller that guarantees uniform ultimate boundedness of the tracking error and ensures that the inter-execution times of the controller are bounded away from zero. In the special case that the derivative of the exogenous input to the reference system is also uniformly bounded, an arbitrarily small ultimate bound for the tracking error can be designed. In this paper, unlike in the event-triggered control literature, the continuous-time control law is assumed to render the closed loop system asymptotically stable rather than ISS with respect to measurement errors. Although on compact sets the latter condition can be arrived at from the former, our choice allows a direct and clear procedure for designing an event-triggering condition with time-varying components that results in fewer controller executions. A preliminary version of the results in this paper have been published in \cite{pavan2011}. The results therein have been expanded here.

The rest of the paper is organized as follows. In Section \ref{sec:prob_stat} we set up the problem and introduce the notation used in the paper. Subsequently, in Section \ref{sec:trig_con}, the major assumptions are stated and the event triggering condition is introduced. The main analytical results are presented in Section \ref{sec:ultim_track}. The theoretical results in the paper are illustrated through numerical simulations of a second order nonlinear system in Section \ref{sec:sim}. Finally, the results are summarized in Section \ref{sec:conc}.


\section{Problem statement and notation}
\label{sec:prob_stat}

Consider a nonlinear system of the form
\begin{equation}
\dot{x} = f(x,u), \quad x \in \mathbb{R}^n, \,\, u \in \mathbb{R}^m
\label{eqn:x_sys}
\end{equation}
which has to track a reference trajectory defined implicitly by the dynamical system
\begin{align}
\dot{x}_d = f_r(x_d,v), \quad x_d \in \mathbb{R}^n, \,\, v \in \mathbb{R}^q
\end{align}
where the external signal $v$ and the initial condition of the signal $x_d$ determine the specific reference trajectory. Let the tracking error be defined as $\tilde{x} \triangleq x - x_d$. In general, a controller for tracking a reference trajectory depends on both the tracking error as well as the reference trajectory. Hence, we assume that the control signal is of the form
\begin{equation}
u = \gamma (\xi), \quad \text{where} \quad \xi \triangleq [\tilde{x}; x_d; v]
\label{eqn:u}
\end{equation}
where the notation $[a_1; a_2; a_3]$ denotes the column vector formed by the concatenation of the vectors $a_1$, $a_2$ and $a_3$.
Consequently, the closed loop system that describes the tracking error is given as
\begin{equation}
\dot{\tilde{x}} = f(\tilde{x}+x_d, \gamma (\xi) ) - \dot{x}_d.
\label{eqn:xt_uc}
\end{equation}

Now, consider a controller that updates the control only intermittently and not continuously in time. Let $t_i$ for $i = 0, 1, 2, \ldots$ be the time instants at which the control is computed and updated. Then, the tracking error evolves as
\begin{align}
\dot{\tilde{x}} = f \big(\tilde{x}+x_d, \gamma(\xi(t_i)) \big) - \dot{x}_d, \quad \text{for } t \in [t_i, t_{i+1}), \text{ } i \in \{0, 1, 2, ...\}.
\label{eqn:xt_ut}
\end{align}
The above dynamical system can also be viewed as a continuously updated control system, albeit with an error in the measurement of the state and the exogenous input. By defining the measurement error as
\begin{align}
&e \triangleq \begin{bmatrix}
     e_{\tilde{x}} \\
     e_{x_d} \\
     e_{v}
\end{bmatrix}
\triangleq \xi(t_i) - \xi \triangleq \begin{bmatrix}
     \tilde{x}(t_i) - \tilde{x} \\
     x_d(t_i) - x_d \\
     v(t_i) - v
\end{bmatrix}, \quad \text{for } t \in [t_i, t_{i+1}), \text{ } i \in \{0, 1, 2, ...\} \label{eqn:edef}
\end{align}
the system in (\ref{eqn:xt_ut}) can be rewritten as
\begin{align}
\dot{\tilde{x}} &= \big[f(\tilde{x}+x_d, \gamma (\xi) ) - \dot{x}_d \big] + \big[ f(\tilde{x}+x_d, \gamma(\xi+e)) - f(\tilde{x}+x_d, \gamma (\xi)) \big]
\label{eqn:perturbed_sys}
\end{align}
where we have expressed the above system as a perturbed version of the dynamical system \eqref{eqn:xt_uc}. Note that $e$ is discontinuous at $t = t_i$, for each $i$, because $e(t_i) = \xi(t_i) - \xi(t_i) = 0$ while $\displaystyle e(t_i^-) \triangleq \lim_{t \uparrow t_i} e(t) = \lim_{t \uparrow t_i} (\xi(t_{i-1}) - \xi(t))$.

In time-triggered or periodic control systems, $t_{i+1} - t_i = T_s$ for all $i \in \{0, 1, 2, \ldots \}$, where $T_s > 0$ is a constant sampling time. On the other hand, in an event-triggered system the time instants $t_i$ in general are not uniformly spaced, and are determined dynamically by an event-triggering condition.

The objective of this paper is to develop an event based controller for tracking a trajectory within a desired ultimate bound. To this end, we assume that when the control is updated continuously in time, the state $x$ tracks the desired trajectory asymptotically, that is, there exists $\gamma$ such that system \eqref{eqn:xt_uc} satisfies $\tilde{x} \rightarrow 0$ as $t \rightarrow \infty$. In the next section, we specify the main assumptions of the paper and develop an event-triggering condition for tracking a given trajectory within a desired bound.


\section{Event-triggering condition for emulation based trajectory tracking control}
\label{sec:trig_con}

There are two main requirements for an event based trajectory tracking controller. It needs to (i) guarantee that the tracking error is at least uniformly ultimately bounded, and (ii) ensure that there is no accumulation of execution times. In this section, an event-triggering condition that satisfies both of these requirements is developed. We begin by formally stating the \textbf{main assumptions} of this paper.
\newcounter{saveenum}
\begin{enumerate}[label={\textbf{(A\arabic*)}},ref={A\arabic*}]
\item\label{A:asymp} Suppose $f(0, \gamma(0)) - f_r(0,0) = 0$ and that there exists a $C^1$ Lyapunov function for the dynamical system in (\ref{eqn:xt_uc}), $V : \mathbb{R}^n \rightarrow \mathbb{R}$, such that for all admissible $x_d$ and $v$,
\begin{align*}
&\alpha_1(\norm{\tilde{x}}) \leq V(\tilde{x}) \leq \alpha_2(\norm{\tilde{x}}) \notag \\
&\frac{\partial V}{\partial \tilde{x}} \big[ f (\tilde{x}+x_d,\gamma(\xi)) - f_r(x_d,v) \big] \leq - \alpha_3(\norm{\tilde{x}})
\end{align*}
where $\alpha_1(.)$, $\alpha_2(.)$, and $\alpha_3(.)$ are class $\mathcal{K}_\infty$ functions\footnote{A continuous function $\alpha : [0, \infty) \rightarrow [0, \infty)$ is said to belong to the class $\mathcal{K}_{\infty}$ if it is strictly increasing, $\alpha(0) = 0$ and $\alpha(r) \rightarrow \infty$ as $r \rightarrow \infty$ \cite{khalil2002_book}.}.
\item\label{A:lipschitz} The functions $f$, $\gamma$ and $f_r$ are Lipschitz on compact sets.
\item\label{A:xd_v_bound} For all time $t \geq 0$, $\norm{[x_d;v]} \leq d$ for some $d \geq 0$ and $v$ is piecewise continuous.
\item\label{A:v} For all time $t \geq 0$, $v$ is differentiable and $\norm{\dot{v}} \leq c$ for some $c \geq 0$.
\setcounter{saveenum}{\value{enumi}}
\end{enumerate}
The notation $\norm{.}$ denotes the Euclidean norm of a vector. In the sequel, it is also used to denote the induced Euclidean norm of a matrix. Note that the meaning of `admissible $x_d$ and $v$' in \eqref{A:asymp} differs in each of our main results, where in each case it is specified precisely. At this stage, it is enough to know that \eqref{A:xd_v_bound} is satisfied in each case. Now, consider the following family of compact sets:
\begin{align}
S(R) = \{ \xi : V(\tilde{x}) \leq \alpha_2(R), \, \norm{[x_d; v]} \leq d \}, \,\,\, \delta S(R) = \{ \xi : V(\tilde{x}) = \alpha_2(R), \, \norm{[x_d; v]} \leq d \}
\label{eqn:Sdef}
\end{align}
Note that for each $R \geq 0$, the sets $S(R)$ and $\delta S(R)$ include all the admissible reference signals, $x_d$ and $v$. For each set $S(R)$ there exists, by assumption (\ref{A:lipschitz}), a vector $L(R) \in \mathbb{R}^{2n+q}$, with each of its components greater than zero such that
\begin{align}
\norm{f(\tilde{x} + x_d, \gamma(\xi+e)) - f(\tilde{x} + x_d,\gamma(\xi)) } \leq L(R)^T |e| \leq \norm{L(R)} \norm{e}, \,\, \forall \, \xi, \, (\xi+e) \in S(R)
\label{eqn:lip_gamma}
\end{align}
where $|e|$ denotes the vector of the absolute values of the components of $e$. Without loss of generality, it may be assumed that each component of $L(R)$ is a non-decreasing function of $R$. In the sequel, we use the notation $S_i$, $\delta S_i$ and $L_i$ to denote $S(\norm{\tilde{x}(t_i)})$, $\delta S(\norm{\tilde{x}(t_i)})$ and $L(\norm{\tilde{x}(t_i)})$, respectively. Next, we define a continuous function, $\beta(.)$, that satisfies
\begin{equation}
\beta(R) \geq \max_{\norm{w} \leq R} \bigg|\bigg|\frac{\partial V(w)}{ \partial w} \bigg|\bigg|, \quad \forall R \geq 0
\label{eqn:beta}
\end{equation}
We now derive the triggering condition that determines the time instants $t_i$ at which the control is updated.

Consider the Lyapunov function, $V(.)$, in assumption (\ref{A:asymp}) as a candidate Lyapunov function for the system \eqref{eqn:xt_ut}. The time derivative of $V(\tilde{x})$, along the flow of the tracking error system, $\dot{V} = (\partial V / \partial \tilde{x})\dot{\tilde{x}}$, may be obtained through the measurement error interpretation, (\ref{eqn:perturbed_sys}).
\begin{align}
\dot{V} =& \frac{\partial V}{\partial \tilde{x}} \big[f(\tilde{x}+x_d,\gamma(\xi)) - \dot{x}_d \big] + \frac{\partial V}{\partial \tilde{x}} \big[f(\tilde{x}+x_d,\gamma(\xi+e)) - f(\tilde{x}+x_d,\gamma(\xi)) \big] \notag \\
\leq & - \alpha_3(\norm{\tilde{x}}) + \frac{\partial V}{\partial \tilde{x}} \big[f(\tilde{x}+x_d,\gamma(\xi+e)) - f(\tilde{x}+x_d,\gamma(\xi)) \big] \label{eqn:Vdot_asymp}\\
\leq & - \alpha_3(\norm{\tilde{x}}) + \beta(\norm{\tilde{x}}) L(R)^T |e|, \quad \forall \, \xi, \, (\xi+e) \in S(R) \label{eqn:Vdot}
\end{align}
where \eqref{eqn:Vdot_asymp} is obtained from assumption (\ref{A:asymp}), and \eqref{eqn:Vdot} is then obtained from \eqref{eqn:Sdef}-\eqref{eqn:beta}. Then, \eqref{eqn:Vdot} suggests a triggering condition.

Consider the following \textbf{triggering condition} (for the sake of clarity, the complete system description including the state equation and the triggering condition are given).
\begin{align}
&\dot{\tilde{x}} = f \big(\tilde{x}+x_d, \gamma(\xi(t_i)) \big) - \dot{x}_d, \quad \text{for } t \in [t_i, t_{i+1}), \text{ } i \in \{0, 1, 2, ...\} \label{eqn:xt_trig}\\
&t_0 = \min \{t \geq 0 : \norm{\tilde{x}} \geq r > 0 \}, \text{ and} \notag\\
&t_{i+1} = \min \{t \geq t_i : L_i^T |\xi(t_i) - \xi| -\frac{\sigma \alpha_3(\norm{\tilde{x}})}{\beta(\norm{\tilde{x}})} \geq 0 \text{ and } \norm{\tilde{x}} \geq r > 0 \} \label{eqn:trig_con}
\end{align}
where $0 < \sigma < 1$ and $r$ is a design parameter that determines the ultimate bound of the tracking error. It is necessary to update the control only when $\norm{\tilde{x}} \geq r$, for some $r > 0$, else it may result in the accumulation of control update times. Notice that each update instant $t_{i+1}$ is defined implicitly with respect to $t_i$. Hence, the initial update instant $t_0$ has been specified separately. As the proposed triggering condition does not allow the control to be updated whenever $\norm{\tilde{x}} < r$, the first update instant, $t_0$, need not be at $t = 0$. Therefore, it is assumed that $u = 0$ for $0 \leq t < t_0$. In the next section the triggering condition \eqref{eqn:trig_con} is shown to guarantee uniform ultimate boundedness of the tracking error for the reference trajectories considered in this paper.

\section{Uniform ultimate boundedness of the trajectory tracking error}
\label{sec:ultim_track}

The following lemma demonstrates, under \eqref{A:asymp}-\eqref{A:xd_v_bound}, that the event-triggering condition \eqref{eqn:trig_con} ensures $\xi \in S_i$ for all $t \in [t_i, t_{i+1})$, for each $i$. Moreover, the lemma also demonstrates that the event-triggering condition \eqref{eqn:trig_con} renders the tracking error ultimately bounded, provided the sequence of control execution times does not exhibit \textit{Zeno behavior} (accumulation of inter-execution times), in other words either the sequence of control execution times is finite or $\displaystyle \lim_{i \rightarrow \infty} t_i = \infty$.

\begin{lemma}\label{lem:bounded}
Consider the system \eqref{eqn:xt_uc}. Suppose that assumptions (\ref{A:asymp}), (\ref{A:lipschitz}) and (\ref{A:xd_v_bound}) are satisfied. Then, in the event-triggered system (\ref{eqn:xt_trig})-\eqref{eqn:trig_con}, for each $i$, $\xi \in S_i$ for all $t \in [t_i, t_{i+1})$. Further, if the initial condition is bounded and the sequence of control execution times does not exhibit Zeno behavior, then the tracking error, $\tilde{x}$, is uniformly ultimately bounded by a ball of radius $r_1 = \alpha_1^{-1}(\alpha_2(r))$. 
\end{lemma}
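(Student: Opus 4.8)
The plan is to prove the set-invariance statement first, by a first-exit-time argument, and then read off ultimate boundedness from it; in both parts the workhorse is the elementary observation that a crossing of $\partial S_i$, or of the level $\alpha_2(r)$, necessarily puts $\norm{\tilde{x}}$ outside the ball of radius $r$, which is exactly where \eqref{eqn:trig_con} supplies the margin $\sigma\alpha_3/\beta$.

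For the invariance claim, note that $\norm{[x_d;v]}\le d$ holds for all $t$ by \eqref{A:xd_v_bound}, so $\xi(t)\in S_i$ reduces to $V(\tilde{x}(t))\le\alpha_2(\norm{\tilde{x}(t_i)})$. Since $t_i$ is a triggering instant (for $i=0$ this is the defining property of $t_0$), $\norm{\tilde{x}(t_i)}\ge r$, and by \eqref{A:asymp} $V(\tilde{x}(t_i))\le\alpha_2(\norm{\tilde{x}(t_i)})$, so $\xi(t_i)\in S_i$. Suppose $\xi$ leaves $S_i$ before $t_{i+1}$ and let $\tau$ be the first such time; by continuity of $t\mapsto V(\tilde{x}(t))$ one has $\tau>t_i$, $V(\tilde{x}(\tau))=\alpha_2(\norm{\tilde{x}(t_i)})$, and $\xi(s)\in S_i$ for $s\in[t_i,\tau]$. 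On $[t_i,\tau]$ both $\xi(s)$ and $\xi(s)+e(s)=\xi(t_i)$ lie in $S_i$, so \eqref{eqn:lip_gamma}--\eqref{eqn:Vdot} apply with the frozen constant $L_i$, giving $\dot{V}(s)\le-\alpha_3(\norm{\tilde{x}(s)})+\beta(\norm{\tilde{x}(s)})L_i^T|e(s)|$. Now $\alpha_2(\norm{\tilde{x}(t_i)})=V(\tilde{x}(\tau))\le\alpha_2(\norm{\tilde{x}(\tau)})$ forces $\norm{\tilde{x}(\tau)}\ge\norm{\tilde{x}(t_i)}\ge r$; since $\tau<t_{i+1}$ the condition \eqref{eqn:trig_con} has not fired at $\tau$, and because $\norm{\tilde{x}(\tau)}\ge r$ this is possible only if $L_i^T|e(\tau)|<\sigma\alpha_3(\norm{\tilde{x}(\tau)})/\beta(\norm{\tilde{x}(\tau)})$. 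Substituting, $\dot{V}(\tau)\le-(1-\sigma)\alpha_3(\norm{\tilde{x}(\tau)})<0$, which is incompatible with $V(\tilde{x}(\tau))=\alpha_2(\norm{\tilde{x}(t_i)})$ being a first upcrossing of that level. Hence $\xi(t)\in S_i$ throughout $[t_i,t_{i+1})$. The same two estimates, together with $e(t_i)=0$, also yield the by-product that $\dot{V}(t)\le-(1-\sigma)\alpha_3(\norm{\tilde{x}(t)})$ at every $t\in[t_i,t_{i+1})$ with $\norm{\tilde{x}(t)}\ge r$.

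For ultimate boundedness, the no-Zeno hypothesis makes the inter-execution structure cover all of $[t_0,\infty)$ (a finite triggering sequence ends with the interval $[t_N,\infty)$; otherwise $t_i\to\infty$), so the by-product holds for a.e.\ $t\ge t_0$. Together with $V\le\alpha_2(\norm{\cdot})$ this says: whenever $V(\tilde{x}(t))\ge\alpha_2(r)$ we have $\norm{\tilde{x}(t)}\ge r$, hence $\dot{V}(t)\le-(1-\sigma)\alpha_3(r)<0$. This is the standard ultimate-boundedness mechanism: the sublevel set $\{V\le\alpha_2(r)\}$ is forward invariant along the closed-loop flow (a first upcrossing of $\alpha_2(r)$ is ruled out exactly as above); from any bounded initial condition $V(\tilde{x}(t))\le\max\{V(\tilde{x}(t_0)),\alpha_2(r)\}$ for all $t\ge t_0$, so there is no finite escape and $\tilde{x}$ (hence $x$, using \eqref{A:xd_v_bound}) stays bounded; and $V$ enters $\{V\le\alpha_2(r)\}$ no later than $t_0+\alpha_2(\norm{\tilde{x}(t_0)})/[(1-\sigma)\alpha_3(r)]$, a time bounded in terms of the initial-condition bound alone. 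Beyond that instant, $\alpha_1(\norm{\tilde{x}(t)})\le V(\tilde{x}(t))\le\alpha_2(r)$, i.e.\ $\norm{\tilde{x}(t)}\le\alpha_1^{-1}(\alpha_2(r))=r_1$, which is precisely uniform ultimate boundedness by the ball of radius $r_1$.

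I expect the invariance step to be the main obstacle. Its awkwardness is that the Lyapunov decrease estimate \eqref{eqn:Vdot} is written with a Lipschitz constant $L_i$ that is valid only on $S_i$, while knowing the trajectory remains in $S_i$ is exactly what must be proved; the first-exit-time argument breaks this circularity, and it closes only because a crossing of $\partial S_i$ necessarily places $\tilde{x}$ outside the ball of radius $r$, where \eqref{eqn:trig_con} guarantees the needed margin. A purely technical point to dispatch is that $v$ is only piecewise continuous (\eqref{A:xd_v_bound}): at the (locally finitely many) jump instants one reads $\dot{V}$ as the right derivative, which still obeys the same bound since the right-hand side of \eqref{eqn:xt_trig} is right-continuous, and this measure-zero set of instants is irrelevant to the integral estimates.
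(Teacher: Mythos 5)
Your proof is correct and follows essentially the same route as the paper's: a first-exit/contradiction argument showing the trajectory cannot cross $\delta S_i$ because at any such crossing $\norm{\tilde{x}}\geq r$, where the not-yet-fired trigger forces $\dot{V}\leq -(1-\sigma)\alpha_3(\norm{\tilde{x}})<0$, followed by the standard Lyapunov ultimate-boundedness argument (positive invariance of $S(r)$ and finite-time entry) under the no-Zeno hypothesis. If anything, your write-up is slightly more explicit than the paper's (e.g., deducing $\norm{\tilde{x}(\tau)}\geq\norm{\tilde{x}(t_i)}\geq r$ from the monotonicity of $\alpha_2$, and noting that jumps of the piecewise-continuous $v$ are handled by reading $\dot{V}$ as a right derivative).
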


\begin{proof}
First, we establish by contradiction that for each $i$, $\xi \in S_i$ for all $t \in [t_i, t_{i+1})$. Note that by definition, $(\xi+e) = \xi(t_i) \in S_i$ and the triggering condition enforces $\norm{\tilde{x}(t_i)} \geq r$. Further, since $\norm{\tilde{x}(t_i)} \geq r$, the open $r$-ball is a proper subset of and is contained within the interior of $S_i$ (that is, its intersection with $\delta S_i$ is an empty set). Also note that sets $S_i$ and $\delta S_i$ (see \eqref{eqn:Sdef} and the text following \eqref{eqn:lip_gamma}) are essentially a sub-level set and a level set, respectively, of the Lyapunov function $V$. Now, let us assume that $\xi$ does escape $S_i$ during the interval $[t_i, t_{i+1})$. Then, since the tracking error $\tilde{x}$ is continuous as a function of time, there exists a $t_i^* \in [t_i, t_{i+1})$ such that $\xi(t_i^*) \in \delta S_i \subset S_i$ and $\dot{V}|_{t = t_i^*} > 0$ (where $\dot{V}|_{t = t_i^*}$ denotes $\dot{V}$ evaluated at $t = t_i^*$). However, as $\xi(t_i^*) \in \delta S_i \subset S_i$, \eqref{eqn:Vdot} and \eqref{eqn:trig_con} imply $\dot{V}|_{t = t_i^*} \leq - (1 - \sigma)\alpha_3(\norm{\tilde{x}(t_i^*)}) < 0$. Thus, having arrived at a contradiction, we conclude that no such $t_i^*$ exists and that the first claim of the lemma is true. Consequently, \eqref{eqn:Vdot} and \eqref{eqn:trig_con} again imply that the derivative $\dot{V}$ along the flow of the system satisfies
\begin{align}
\dot{V} &\leq - (1 - \sigma)\alpha_3(\norm{\tilde{x}}) < 0, \quad \forall t \in [t_i, t_{i+1} ), \,\,\, \text{s.t. } \norm{\tilde{x}(t)} \geq r \label{eqn:Vdot_neg}
\end{align}
and further, for each $R \geq r$ it is true that any solution that enters the set $S(R)$ does not leave it subsequently.

The assumption that $\tilde{x}(0)$ is bounded and the definition of $t_0$ imply that $\tilde{x}(t_0)$ is also bounded. Then, the assumption that the sequence of control execution times does not exhibit Zeno behavior implies that the triggering condition, \eqref{eqn:trig_con}, is well defined and that $\dot{V} \leq - (1 - \sigma)\alpha_3(\norm{\tilde{x}}) < 0$, $\forall t \in [0, \infty)$ s.t. $\norm{\tilde{x}(t)} \geq r$ (if there are finitely many control updates, that is $i \in \{ 0, 1, \ldots, N \}$, then $t_{N+1} = \infty$). Then, in fact, it is true that $S(R)$ is positively invariant for each $R \geq r$. In particular, $S_0$ is positively invariant. Then, \eqref{eqn:Vdot_neg} implies that $\dot{V} \leq - (1 - \sigma)\alpha_3(r) < 0$ for all $\xi \in S_0$ such that $\norm{\tilde{x}} \geq r$. Hence all solutions, $\xi$, with bounded initial conditions enter the set $S(r)$ in finite time and as $S(r)$ is positively invariant, the solutions stay there. Therefore the tracking error, $\tilde{x}$, is uniformly ultimately bounded by the closed ball of radius $r_1 = \alpha_1^{-1}(\alpha_2(r))$.
\end{proof}

Looking back at \eqref{eqn:trig_con}, it is clear that the functions $\alpha_3$ and $\beta$ play a crucial role in determining how often an event is triggered or in computing a lower bound for the inter-execution times. Specifically, the following definition is used in the sequel.
\begin{equation}
\Delta_{s_1}^{s_2} \triangleq \min_{s_1 \leq \norm{\tilde{x}} \leq s_2} \sigma \alpha_3(\norm{\tilde{x}})/\beta(\norm{\tilde{x}}) \label{eqn:Delta}
\end{equation}
where $s_2 \geq s_1 > 0$ are any positive real numbers, the functions $\alpha_3$ and $\beta$ are as defined in \eqref{A:asymp} and \eqref{eqn:beta}, respectively. Since $\alpha_3$ and $\beta$ are continuous positive definite functions, $\Delta_{s_1}^{s_2}$ is well defined and positive for any given $s_2 \geq s_1 > 0$.

Now we present the first main result of the paper. It demonstrates, for a particular class of reference trajectories, that in the event-triggered system \eqref{eqn:xt_trig}-\eqref{eqn:trig_con} the inter-execution times are uniformly bounded away from zero while the tracking error is uniformly ultimately bounded.

\begin{theorem}
\label{thm:vdot_bound}
Consider the system (\ref{eqn:xt_uc}). Suppose that assumptions (\ref{A:asymp}), (\ref{A:lipschitz}), (\ref{A:xd_v_bound}) and (\ref{A:v}) are satisfied. Then, for the event-triggered system (\ref{eqn:xt_trig})-\eqref{eqn:trig_con}, the tracking error, $\tilde{x}$, is uniformly ultimately bounded by a ball of radius $r_1 = \alpha_1^{-1}(\alpha_2(r))$, and the inter-execution times ($t_{i+1} - t_i$) for $i \in \{0, 1, 2, \ldots \}$ are uniformly bounded below by a positive constant that depends on the bound of the initial tracking error.
\end{theorem}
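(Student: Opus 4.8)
The plan is to note that Lemma~\ref{lem:bounded} already delivers the ultimate bound $r_1=\alpha_1^{-1}(\alpha_2(r))$ as soon as the sequence $\{t_i\}$ is known to be free of Zeno behavior, so the crux is to produce a \emph{uniform} positive lower bound on the inter-execution times $t_{i+1}-t_i$. (If $t_0=\infty$ there are no updates, $\norm{\tilde x}<r$ for all $t$, and there is nothing to prove, so assume $t_0<\infty$ and write $T^*\triangleq\sup_i t_{i+1}\in(t_0,\infty]$.) I would proceed in four steps: (i) establish an a priori bound $\norm{\tilde x(t)}\le R^*$ valid on $[t_0,T^*)$, so that $\norm{\tilde x(t_i)}\le R^*$ for every update index; (ii) deduce that $\norm{\dot\xi}$ is uniformly bounded on $[t_0,T^*)$ by a constant $M^*$ depending only on $R^*$; (iii) combine with the triggering rule \eqref{eqn:trig_con} to lower-bound $t_{i+1}-t_i$; (iv) invoke Lemma~\ref{lem:bounded}.

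For (i), I would use that the first claim of Lemma~\ref{lem:bounded} holds on every interval $[t_i,t_{i+1})$ \emph{without} presupposing absence of Zeno, so by \eqref{eqn:Vdot}--\eqref{eqn:trig_con} we have $\dot V\le-(1-\sigma)\alpha_3(\norm{\tilde x})<0$ at every $t\in[t_0,T^*)$ with $\norm{\tilde x(t)}\ge r$. Since $t\mapsto V(\tilde x(t))$ is continuous and absolutely continuous on $[t_0,T^*)$, this precludes $V(\tilde x(\cdot))$ from ever exceeding $c^*\triangleq\max\{V(\tilde x(t_0)),\alpha_2(r)\}$: whenever $V(\tilde x)>c^*\ge\alpha_2(r)$ one has $\alpha_2(\norm{\tilde x})\ge V(\tilde x)>\alpha_2(r)$, hence $\norm{\tilde x}>r$ and $\dot V<0$, so $V(\tilde x(\cdot))$ cannot rise above $c^*$ starting from $V(\tilde x(t_0))\le c^*$. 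Therefore $\norm{\tilde x(t)}\le R^*\triangleq\alpha_1^{-1}(c^*)$ for all $t\in[t_0,T^*)$, and since $\norm{\tilde x(t_0)}\le\max\{r,\norm{\tilde x(0)}\}$ the constant $R^*$ depends only on the bound on the initial tracking error. I expect this step to be the main obstacle: a naive argument that merely propagates $\xi(t)\in S_i$ from one interval to the next only gives $\norm{\tilde x(t_{i+1})}\le\alpha_1^{-1}(\alpha_2(\norm{\tilde x(t_i)}))$ and so allows the nested sets $S_i$ to grow without control, so one must instead track $V$ across the update instants, which is legitimate precisely because the first claim of Lemma~\ref{lem:bounded} is proved interval by interval.

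For (ii), fix $i$ and $t\in[t_i,t_{i+1})$; by (i) and \eqref{eqn:Sdef} both $\xi(t)$ and $\xi(t_i)$ lie in the compact set $S(R^*)$, so by \eqref{A:lipschitz} the terms $\norm{\dot{\tilde x}}=\norm{f(\tilde x+x_d,\gamma(\xi(t_i)))-f_r(x_d,v)}$ and $\norm{\dot x_d}=\norm{f_r(x_d,v)}$ are bounded by a constant determined by $R^*$ (and $d$), and together with $\norm{\dot v}\le c$ from \eqref{A:v} this yields $\norm{\dot\xi(t)}\le M^*$ with $M^*$ depending only on $R^*$. For (iii), since $e(t_i)=0$ and $e(t)=-\int_{t_i}^t\dot\xi(s)\,ds$ on $[t_i,t_{i+1})$, we get $\norm{e(t)}\le M^*(t-t_i)$, whence by \eqref{eqn:lip_gamma} and the monotonicity of the components of $L(\cdot)$, $L_i^T|e(t)|\le\norm{L(R^*)}M^*(t-t_i)$. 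At the triggering instant $t_{i+1}$, \eqref{eqn:trig_con} forces $r\le\norm{\tilde x(t_{i+1})}$ (while $\norm{\tilde x(t_{i+1})}\le R^*$ by (i)) together with $L_i^T|e(t_{i+1})|\ge\sigma\alpha_3(\norm{\tilde x(t_{i+1})})/\beta(\norm{\tilde x(t_{i+1})})\ge\Delta_r^{R^*}$, the last inequality by \eqref{eqn:Delta}. Combining,
\begin{align}
t_{i+1}-t_i \;\ge\; \frac{\Delta_r^{R^*}}{\norm{L(R^*)}\,M^*} \;=:\; \tau \;>\; 0 \notag
\end{align}
for every $i$, with $\tau$ positive by the remark after \eqref{eqn:Delta} and depending only on the bound on the initial tracking error. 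Finally, for (iv), the uniform bound $t_{i+1}-t_i\ge\tau>0$ forces $T^*=\infty$ whenever the update sequence is infinite, so there is no Zeno behavior, and Lemma~\ref{lem:bounded} then gives that $\tilde x$ is uniformly ultimately bounded by the ball of radius $r_1=\alpha_1^{-1}(\alpha_2(r))$.
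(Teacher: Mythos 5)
Your proof is correct and follows the same overall skeleton as the paper's: ultimate boundedness is delegated to Lemma~\ref{lem:bounded}, and the real work is a uniform positive lower bound on $t_{i+1}-t_i$, obtained by asking how long the measurement error needs to grow from zero to the threshold $\Delta$ of \eqref{eqn:Delta}, with all constants made uniform by confining the trajectory to a compact set determined by the initial condition. Two deviations are worth noting. First, the paper obtains its a priori confinement by citing the $S_0$-invariance from the proof of Lemma~\ref{lem:bounded}, giving $\mu_0=\alpha_1^{-1}(\alpha_2(\norm{\tilde x(t_0)}))$, and it states the ultimate bound up front even though Lemma~\ref{lem:bounded} formally presupposes non-Zeno behavior; your ordering --- prove the inter-execution bound on $[t_0,T^*)$ using only the interval-by-interval part of the lemma, conclude non-Zeno, then invoke the lemma --- is the logically cleaner arrangement of the same ingredients, your $R^*$ satisfies $R^*\le\mu_0$, and your use of $L(R^*)$ sidesteps the paper's slightly loose step of bounding $\norm{L_i}$ by $\norm{L_0}$ (strictly, $L(\mu_0)$ is what the containment in $S_0$ justifies). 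Second, for the growth of $\norm{e}$ the paper derives $\norm{\dot e}\le\norm{L_0}\norm{e}+P_0+c$ and applies the Comparison Lemma, yielding the logarithmic estimate \eqref{eqn:T_case1}, whereas you bound $\norm{\dot\xi}$ by a constant $M^*$ (legitimate, since the frozen input $\gamma(\xi(t_i))$ and the state remain in a fixed compact set on which $f$, $\gamma$, $f_r$ are bounded) and get linear growth of $\norm{e}$ and the simpler bound $\tau=\Delta_r^{R^*}/(\norm{L(R^*)}M^*)$. Both estimates are valid and both constants depend only on the bound of the initial tracking error (and $r$, $d$, $c$); the paper's form is the one reused in Theorems~\ref{thm:v_bound} and \ref{thm:v_jump_dwell}, while yours is more elementary.
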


\begin{proof}
Uniform ultimate boundedness of the tracking error follows from Lemma \ref{lem:bounded}. Only the existence of a positive lower bound for the inter-execution times remains to be shown. Note that for each $i$, $\norm{e(t_i)} = 0$ and $\norm{\tilde{x}(t_i)} \geq r$. Hence, the triggering condition \eqref{eqn:trig_con} implies that the $i^{\text{th}}$ inter-update time, $(t_{i+1} - t_i)$, is at least equal to the time it takes $\norm{L_i} \norm{e}$ to grow from $0$ to $\sigma \alpha_3(\norm{\tilde{x}}) / \beta(\norm{\tilde{x}})$. Recall from the proof of Lemma \ref{lem:bounded} that every solution, $\xi$, stays in the set $S_0$ for all $t \in [t_0, t_i)$, for each $i$. Thus, $\norm{L_i} \leq \norm{L_0}$ for each $i$. Notice that
\begin{equation}
S_0 \subset \{ \xi : \norm{\tilde{x}} \leq \mu_0, \, \norm{[x_d; v]} \leq d \} \label{eqn:S0_bound}
\end{equation}
where ${\mu_0} = \alpha_1^{-1}(\alpha_2(\norm{\tilde{x}(t_0)}))$. Then, \eqref{eqn:Delta} implies $t_{i+1} - t_i \geq T$, where $T$ is the time it takes $\norm{e}$ to grow from $0$ to $\Delta_r^{\mu_0} / \norm{L_0}$. If we show that $T > 0$, then the proof is complete.

From \eqref{eqn:perturbed_sys}, and the triangle inequality property, we observe that
\begin{align}
\norm{\dot{\tilde{x}}} \leq \norm{f(\tilde{x}+x_d, \gamma (\xi) ) - \dot{x}_d} + \norm{ f(\tilde{x}+x_d, \gamma(\xi+e)) - f(\tilde{x}+x_d, \gamma (\xi))}
\end{align}
From \eqref{eqn:lip_gamma}, the second term is bounded by $L_0^T |e| \leq \norm{L_0} \norm{e}$ on the set $S_0$. Since, according to (\ref{A:asymp}), $f(0, \gamma (0) ) - f_r(0,0) = 0$, \eqref{A:lipschitz} then implies that there exist Lipschitz constants $P_1 \geq 0$ and $P_2 \geq 0$ such that
\begin{align*}
\norm{\dot{\tilde{x}}} &\leq P_1 \norm{\tilde{x}} + P_2 \norm{[x_d; v]} + L_0^T |e|\\
&\leq P_1 {\mu_0} + P_2 d + \norm{L_0} \norm{e}
\end{align*}
where the second inequality is obtained from \eqref{eqn:S0_bound}. Assumptions \eqref{A:lipschitz}-\eqref{A:xd_v_bound} imply that there exists a constant $P_3 \geq 0$ such that $\norm{\dot{x}_d} \leq P_3 d$ and \eqref{A:v} implies $\norm{\dot{v}} \leq c$. Then, by letting $P_0 = P_1 {\mu_0} + (P_2 + P_3) d$ and from the definition $\dot{e} = - [\dot{\tilde{x}}; \dot{x}_d; \dot{v}]$ it follows that
\begin{equation}
\frac{\mathrm{d}\norm{e}}{\mathrm{d}t} \leq \norm{\dot{e}} \leq \norm{L_0} \norm{e} + P_0 + c
\end{equation}
Note that for $\norm{e} = 0$, the first inequality holds for all the directional derivatives of $\norm{e}$. Then, according to the Comparison Lemma \cite{khalil2002_book}
\begin{equation}
\norm{e} \leq \frac{P_0 + c}{\norm{L_0}} (\mathrm{e}^{\norm{L_0}(t-t_i)} - 1), \quad \text{for} \quad t \geq t_i.
\end{equation}
Thus, the inter-execution times are uniformly lower bounded by $T$, which satisfies
\begin{align}
T \geq \frac{1}{\norm{L_0}} \log \bigg(1 + \frac{\Delta_r^{\mu_0}}{P_0 + c} \bigg).
\label{eqn:T_case1}
\end{align}
As $\norm{L_0}$ is finite and $\Delta_r^{\mu_0} > 0$, we conclude that the inter-execution times have a uniform lower bound, $T$, that is greater than zero.
\end{proof}


In the next result, the conditions on the reference trajectory are relaxed by no longer requiring it to satisfy assumption (\ref{A:v}). Instead, to ensure the absence of Zeno behavior, a new assumption is made - that $d_v$, the uniform bound on $\norm{v}$, is no larger than a quantity determined by $\Delta_r^{\mu_0}$ and $L_0$. The new assumptions, in contrast to Theorem \ref{thm:vdot_bound}, lead to a constraint on the choice of the radius $r$ in the triggering condition and ensure only local uniform ultimate boundedness of the trajectory tracking error. Let $L(R) \triangleq [ Q(R); M(R) ]$ and $L_i \triangleq [ Q_i; M_i ]$ where $Q(R), Q_i \in \mathbb{R}^{2n}$ and $M(R), M_i \in \mathbb{R}^q$. Now, the second main result is presented.

\begin{theorem}
\label{thm:v_bound}
Consider the system defined by (\ref{eqn:xt_uc}). Suppose that the assumptions (\ref{A:asymp}), (\ref{A:lipschitz}) and (\ref{A:xd_v_bound}) hold. Also, for some $R_0 \geq r$ suppose that $\Delta_r^{\mu_0} - 2d_v \norm{M(R_0)} > 0$, where ${\mu_0} = \alpha_1^{-1}(\alpha_2(R_0))$, $\Delta_r^{\mu_0}$ is given by \eqref{eqn:Delta} and $d_v$ is the uniform bound on $\norm{v}$. If $\norm{\tilde{x}(0)} \leq R_0$, then in the event-triggered system (\ref{eqn:xt_trig})-\eqref{eqn:trig_con}, the tracking error, $\tilde{x}$, is uniformly ultimately bounded by a ball of radius $r_1 = \alpha_1^{-1}(\alpha_2(r))$, and the inter-update times $(t_{i+1} - t_i)$ for $i \in \{0, 1, 2, \ldots \}$ are uniformly bounded below by a positive constant that depends on $R_0$.
\end{theorem}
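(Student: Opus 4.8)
The plan is to re-use the architecture of the proof of Theorem~\ref{thm:vdot_bound}. As there, uniform ultimate boundedness of $\tilde{x}$ by the ball of radius $r_1=\alpha_1^{-1}(\alpha_2(r))$ will be inherited from Lemma~\ref{lem:bounded} as soon as a uniform positive lower bound on the inter-execution times is exhibited, so that is the only thing to establish. The obstruction relative to Theorem~\ref{thm:vdot_bound} is that, without assumption~(\ref{A:v}), $\dot v$ need not exist, so the growth of the full measurement error $e$ can no longer be controlled through $\dot e=-[\dot{\tilde x};\dot x_d;\dot v]$. The device that replaces~(\ref{A:v}) is the observation that the $v$-component of the error remains \emph{pointwise} bounded: by~(\ref{A:xd_v_bound}), $\norm{e_v(t)}=\norm{v(t_i)-v(t)}\le 2d_v$ for every $t\in[t_i,t_{i+1})$, with no differentiability needed. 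Accordingly I would split the triggering functional using $L_i=[Q_i;M_i]$, writing $L_i^T|e|=Q_i^T|[e_{\tilde x};e_{x_d}]|+M_i^T|e_v|$ and bounding the last term by $\norm{M_i}\,\norm{e_v}\le 2d_v\norm{M(R_0)}$, so that an event can occur only once the remaining, ``differentiable'', part of the error satisfies $Q_i^T|[e_{\tilde x};e_{x_d}]|\ge \sigma\alpha_3(\norm{\tilde x})/\beta(\norm{\tilde x})-2d_v\norm{M(R_0)}$.

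Next I would confine the solution to a fixed compact set, exactly as in Lemma~\ref{lem:bounded}. Since $\norm{\tilde x(0)}\le R_0$ forces $\norm{\tilde x(t_0)}\le R_0$ and hence $\xi(t_0)\in S(R_0)$, and since~\eqref{eqn:Vdot} and~\eqref{eqn:trig_con} give $\dot V<0$ whenever $V(\tilde x)>\alpha_2(r)$, the value $V(\tilde x(\cdot))$ can never exceed $\max\{V(\tilde x(t_0)),\alpha_2(r)\}\le\alpha_2(R_0)$; consequently $\xi(t)\in S_0\subseteq S(R_0)$ and $\norm{\tilde x(t)}\le\mu_0=\alpha_1^{-1}(\alpha_2(R_0))$ for all $t\ge t_0$ (for as long as the sequence $\{t_i\}$ is defined; the lower bound derived next shows it is defined for all time). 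On this set the Lipschitz bound~\eqref{eqn:lip_gamma} holds with $L_0$, and $\sigma\alpha_3(\norm{\tilde x})/\beta(\norm{\tilde x})\ge\Delta_r^{\mu_0}$ whenever $r\le\norm{\tilde x}\le\mu_0$, by~\eqref{eqn:Delta}. Therefore an event requires $Q_i^T|[e_{\tilde x};e_{x_d}]|\ge\Delta_r^{\mu_0}-2d_v\norm{M(R_0)}=:\kappa$, and $\kappa>0$ is precisely the standing hypothesis; since $Q_i^T|[e_{\tilde x};e_{x_d}]|\le\norm{Q_0}\,\norm{[e_{\tilde x};e_{x_d}]}$, this forces $\norm{[e_{\tilde x};e_{x_d}]}\ge\kappa/\norm{Q_0}$.

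It then remains to bound the time needed for $y\triangleq\norm{[e_{\tilde x};e_{x_d}]}$ to grow from $0$ at $t_i$ to $\kappa/\norm{Q_0}$, and here the $\dot v$ term is absent. Reasoning as in Theorem~\ref{thm:vdot_bound} from~\eqref{eqn:perturbed_sys}, (\ref{A:asymp}), (\ref{A:lipschitz}) and~\eqref{eqn:S0_bound}, there are constants $P_1,P_2,P_3\ge 0$ with $\norm{\dot{\tilde x}}\le P_1\mu_0+P_2 d+\norm{L_0}\,\norm{e}$ and $\norm{\dot x_d}\le P_3 d$; writing $P_0=P_1\mu_0+(P_2+P_3)d$ and using $\norm{e}\le y+2d_v$,
\begin{equation*}
\frac{\mathrm d y}{\mathrm d t}\le\norm{[\dot{\tilde x};\dot x_d]}\le\bigl(P_0+2d_v\norm{L_0}\bigr)+\norm{L_0}\,y,
\end{equation*}
with the directional-derivative caveat at $y=0$ handled as in Theorem~\ref{thm:vdot_bound}. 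The Comparison Lemma gives $y(t)\le\frac{P_0+2d_v\norm{L_0}}{\norm{L_0}}\bigl(\mathrm e^{\norm{L_0}(t-t_i)}-1\bigr)$, so $t_{i+1}-t_i\ge T$ with
\begin{equation*}
T\ge\frac{1}{\norm{L_0}}\log\!\Bigl(1+\frac{\norm{L_0}\bigl(\Delta_r^{\mu_0}-2d_v\norm{M(R_0)}\bigr)}{\norm{Q_0}\bigl(P_0+2d_v\norm{L_0}\bigr)}\Bigr)>0,
\end{equation*}
a bound determined by $R_0$ alone. This precludes accumulation of the $t_i$, so Lemma~\ref{lem:bounded} applies and yields the stated ultimate boundedness.

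The step I expect to be the main obstacle is the bookkeeping that makes the splitting of $L_i^T|e|$ legitimate uniformly in $i$: one must first secure the invariance of $S(R_0)$ so that $L_0$, $Q_0$ and $M(R_0)$ remain valid bounding vectors along the whole motion, which is exactly why the hypothesis is phrased in terms of $R_0$ and $\mu_0$ rather than the a~priori unknown $\norm{\tilde x(t_i)}$, and this is also the mechanism that makes the result local and constrains $r$ through the requirement $\Delta_r^{\mu_0}>2d_v\norm{M(R_0)}$. Everything downstream of the split is a near-verbatim repetition of the comparison-lemma computation of Theorem~\ref{thm:vdot_bound}, with $c$ replaced by $2d_v\norm{L_0}$ in the exponential-growth term and the threshold $\Delta_r^{\mu_0}$ lowered to $\Delta_r^{\mu_0}-2d_v\norm{M(R_0)}$.
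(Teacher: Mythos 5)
Your proposal is correct and follows essentially the same route as the paper's proof: invoke Lemma \ref{lem:bounded} for invariance of $S(R_0)$ and ultimate boundedness, split $L_i^T|e|$ via $L_i=[Q_i;M_i]$, bound the non-differentiable part pointwise by $M_i^T|e_v|\le 2d_v\norm{M(R_0)}$ so that an event forces $Q_i^T|[e_{\tilde x};e_{x_d}]|\ge\Delta_r^{\mu_0}-2d_v\norm{M(R_0)}>0$, and then run the Theorem \ref{thm:vdot_bound} comparison-lemma estimate on $\norm{[e_{\tilde x};e_{x_d}]}$. The only (harmless) difference is bookkeeping: you use $\norm{L_0}$ and $\norm{e}\le y+2d_v$ in the growth bound where the paper uses $\norm{Q_0}$ and splits $L_0^T|e|$ directly, yielding a slightly more conservative but still positive lower bound on the inter-execution times.
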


\begin{proof}
The proof is very similar to that of Theorem \ref{thm:vdot_bound}, and hence only the essential steps are described here. According to Lemma \ref{lem:bounded} each solution, $\xi$, with $\norm{\tilde{x}(0)} \leq R_0$ stays in the set $S(R_0)$. Hence, $\norm{M_i} \leq \norm{M(R_0)}$ and $\norm{Q_i} \leq \norm{Q(R_0)}$ for each $i$. Since $\norm{v}$ is uniformly bounded by $d_v$ it follows that for each $i$, $M_i^T |e_v| \leq \norm{M_i} \norm{e_v} \leq 2d_v \norm{M(R_0)}$, where $e_v = v(t_i) - v$ and $|e_v|$ denotes the component-wise absolute value of the vector $e_v$. The definitions of $Q_i$ and $M_i$ imply that $L_i^T |e| = Q_i^T |[e_{\tilde{x}}; e_{x_d}]| + M_i^T |e_v| \leq Q_i^T |[e_{\tilde{x}}; e_{x_d}]| + 2d_v \norm{M(R_0)}$.

Note that for each $i$, $r \leq \norm{\tilde{x}(t_i)} \leq {\mu_0}$. Thus, the triggering condition in (\ref{eqn:trig_con}) implies that for each $i$, $L_{i-1}^T |e(t_i^-)| \geq \Delta_r^{\mu_0}$, or equivalently, $Q_{i-1}^T |[e_{\tilde{x}}(t_i^-); e_{x_d}(t_i^-)]| \geq \delta \triangleq \Delta_r^{\mu_0} - 2d_v \norm{M(R_0)} > 0$, the last inequality being one of the assumptions. Hence, the inter-execution times $t_{i+1} - t_i \geq T$, where $T$ is the time it takes $\norm{[e_{\tilde{x}}; e_{x_d}]}$ to grow from $0$ to $\delta /\norm{Q(R_0)}$. If we show that $T > 0$, then the proof is complete.

Following steps similar to those in the proof of Theorem \ref{thm:vdot_bound}, we know that there exists a finite $P_0 \geq 0$ such that $\displaystyle \frac{\mathrm{d}\norm{[e_{\tilde{x}}; e_{x_d}]}}{\mathrm{d}t} \leq \norm{Q_0} \norm{[e_{\tilde{x}}; e_{x_d}]} + P_0 + 2d_v \norm{M(R_0)}$. Note that for $\norm{[e_{\tilde{x}}; e_{x_d}]} = 0$, the inequality holds for all the directional derivatives. Thus, the inter-execution times are uniformly lower bounded by $T$, which satisfies
\begin{align}
T \geq \frac{1}{\norm{Q_0}} \log \bigg(1 + \frac{\Delta_r^{\mu_0} - 2d_v}{P_0 + 2d_v \norm{M(R_0)}} \bigg).
\label{eqn:T_case2}
\end{align}
As $\norm{Q_0}$ is finite, we conclude that the inter-execution times have a lower bound, $T$, that is greater than zero.
\end{proof}


Theorem \ref{thm:v_bound} is somewhat conservative because only the uniform bound on $\norm{v}$ is utilized in determining the ultimate bound and the lower bound on the inter-execution times. A more useful result is obtained by imposing only slightly stricter constraints on $v$ - that jumps in $v$ are separated in time by $T_v > 0$, that the magnitude of each jump is upper bounded by a known constant and that $v$ is Lipschitz between jumps. This is expressed formally in the following assumption.
\begin{enumerate}[label={\textbf{(A\arabic*)}},ref={A\arabic*}]
\setcounter{enumi}{\value{saveenum}}
\item \label{A:v_jmp_dwell} There exist constants $c \geq 0$, $T_v \geq 0$ and $J_v \geq 0$ such that for all $t,s \geq 0$, the following holds: $\norm{v(t) - v(s)} \leq c |t -s| + \Big\lceil \frac{|t-s|}{T_v} \Big\rceil J_v$, where $\lceil . \rceil$ is the ceiling function.
\end{enumerate}
Now, the final result is presented.

\begin{theorem}
\label{thm:v_jump_dwell}
Consider the system defined by (\ref{eqn:xt_uc}). Suppose that the assumptions \eqref{A:asymp}, \eqref{A:lipschitz}, \eqref{A:xd_v_bound} and \eqref{A:v_jmp_dwell} hold. Also, for some $R_0 \geq r$ suppose that $\Delta_r^{\mu_0} - J_v \norm{M(R_0)} > 0$, where ${\mu_0} = \alpha_1^{-1}(\alpha_2(R_0))$ and $\Delta_r^{\mu_0}$ is given by \eqref{eqn:Delta}. If $\norm{\tilde{x}(0)} \leq R_0$, then in the event-triggered system (\ref{eqn:xt_trig})-\eqref{eqn:trig_con}, the tracking error, $\tilde{x}$, is uniformly ultimately bounded by a ball of radius $r_1 = \alpha_1^{-1}(\alpha_2(r))$, and the inter-update times $(t_{i+1} - t_i)$ for $i \in \{0, 1, 2, \ldots \}$ are uniformly bounded below by a positive constant that depends on $R_0$.
\end{theorem}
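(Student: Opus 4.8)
The plan is to follow the proofs of Theorems~\ref{thm:vdot_bound} and~\ref{thm:v_bound}, since the only genuinely new point is how the jumps in $v$ are handled. As in Theorem~\ref{thm:v_bound}, uniform ultimate boundedness of $\tilde{x}$ by the ball of radius $r_1 = \alpha_1^{-1}(\alpha_2(r))$ will follow from Lemma~\ref{lem:bounded} as soon as a uniform positive lower bound on the inter-execution times is exhibited, which precludes Zeno behaviour. By Lemma~\ref{lem:bounded}, every solution with $\norm{\tilde{x}(0)} \le R_0$ stays in $S(R_0)$; hence on each interval $[t_i,t_{i+1})$ we have $\xi,\,\xi+e \in S(R_0)$, $\norm{\tilde{x}} \le \mu_0 = \alpha_1^{-1}(\alpha_2(R_0))$, $\norm{[x_d;v]} \le d$, and $\norm{Q_i} \le \norm{Q(R_0)}$, $\norm{M_i} \le \norm{M(R_0)}$. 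So only the lower bound on $(t_{i+1}-t_i)$ remains, and we may assume $T_v>0$ (if $T_v=0$ then \eqref{A:v_jmp_dwell} forces $J_v=0$, $v$ is $c$-Lipschitz, and the proof of Theorem~\ref{thm:vdot_bound} applies verbatim with $c$ in place of the $\dot v$ bound).

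Next I would isolate the contribution of the input measurement error. Fix an interval $[t_i,t_{i+1})$; if $t_{i+1}-t_i \ge T_v$ there is nothing to prove for that interval, so suppose $t_{i+1}-t_i < T_v$. Then $0 \le (t-t_i)/T_v < 1$ on $[t_i,t_{i+1})$, so $\lceil (t-t_i)/T_v\rceil = 1$ and \eqref{A:v_jmp_dwell} gives $\norm{e_v(t)} = \norm{v(t_i)-v(t)} \le c\,(t-t_i) + J_v$ there; i.e.\ a short inter-execution interval can absorb at most one jump. Writing $L_i^T|e| = Q_i^T|[e_{\tilde{x}};e_{x_d}]| + M_i^T|e_v|$ as in the proof of Theorem~\ref{thm:v_bound} and using $\norm{M_i}\le\norm{M(R_0)}$, the triggering rule \eqref{eqn:trig_con} --- which forces $L_i^T|e(t_{i+1}^-)| \ge \sigma\alpha_3(\norm{\tilde{x}(t_{i+1})})/\beta(\norm{\tilde{x}(t_{i+1})}) \ge \Delta_r^{\mu_0}$, since $r \le \norm{\tilde{x}(t_{i+1})} \le \mu_0$ --- then yields
\[
\norm{Q(R_0)}\,\norm{[e_{\tilde{x}}; e_{x_d}](t_{i+1}^-)} \ \ge\ \delta - c\,(t_{i+1}-t_i)\,\norm{M(R_0)},
\]
where $\delta \triangleq \Delta_r^{\mu_0} - J_v\norm{M(R_0)} > 0$ by the standing hypothesis. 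The key point is that the Lipschitz part $c\,(t-t_i)$ of $e_v$ is kept on the right-hand side, so only $\Delta_r^{\mu_0} - J_v\norm{M(R_0)} > 0$, and not the stronger $\Delta_r^{\mu_0} - (cT_v + J_v)\norm{M(R_0)} > 0$, is required.

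It then remains to bound how fast $\norm{[e_{\tilde{x}};e_{x_d}]}$ can grow, exactly as in Theorems~\ref{thm:vdot_bound} and~\ref{thm:v_bound}. Using $f(0,\gamma(0)) - f_r(0,0) = 0$, \eqref{A:lipschitz}, \eqref{A:xd_v_bound}, the fact that $\xi\in S(R_0)$ on $[t_i,t_{i+1})$, and the bound on $\norm{e_v}$ above with $c\,(t-t_i) \le cT_v$, one obtains a finite constant $P_0 \ge 0$ with
\[
\frac{\mathrm{d}\norm{[e_{\tilde{x}}; e_{x_d}]}}{\mathrm{d}t} \le \norm{Q(R_0)}\,\norm{[e_{\tilde{x}}; e_{x_d}]} + P_0 + \norm{M(R_0)}\big(cT_v + J_v\big)
\]
on $[t_i,t_{i+1})$, with zero value at $t_i$ (and, when $\norm{[e_{\tilde{x}};e_{x_d}]}=0$, the inequality holds for the directional derivatives). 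The Comparison Lemma then gives $\norm{[e_{\tilde{x}};e_{x_d}](t)} \le \psi(t-t_i)$ for a fixed continuous $\psi$ with $\psi(0)=0$ that depends only on $R_0$ (through $\mu_0,P_0,\norm{Q(R_0)},\norm{M(R_0)}$), $c$, $T_v$ and $J_v$.

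Finally I would combine the two estimates. Setting $g(\tau) \triangleq c\,\tau\,\norm{M(R_0)} + \norm{Q(R_0)}\,\psi(\tau)$, which is continuous, strictly increasing, $g(0)=0$ and $g(\tau)\to\infty$, the two displayed bounds give $\delta \le g(t_{i+1}-t_i)$, hence $t_{i+1}-t_i \ge g^{-1}(\delta) > 0$ on every interval with $t_{i+1}-t_i < T_v$. Together with the trivial case, every inter-execution interval satisfies $t_{i+1}-t_i \ge T \triangleq \min\{T_v,\, g^{-1}(\delta)\} > 0$, a constant depending only on $R_0$; this rules out Zeno behaviour, and the ultimate-boundedness claim then follows from Lemma~\ref{lem:bounded}. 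I expect the only real obstacle to be the bookkeeping of the two middle paragraphs --- reading off from \eqref{A:v_jmp_dwell} that a sub-$T_v$ interval carries at most the error $c\,(t-t_i)+J_v$ in $e_v$, and splitting $L_i^T|e|$ so that the design inequality needed for a positive dwell time stays as weak as $\Delta_r^{\mu_0} - J_v\norm{M(R_0)} > 0$; everything else is a routine transcription of the earlier proofs.
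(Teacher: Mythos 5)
Your proposal is correct and follows essentially the same route as the paper's proof: both exploit that an inter-execution interval shorter than (a multiple of) $T_v$ can absorb only a bounded jump contribution $J_v\norm{M(R_0)}$, leaving by hypothesis a positive residual threshold $\Delta_r^{\mu_0}-J_v\norm{M(R_0)}$ that the continuously growing part of the error needs a uniformly positive time to reach, via the same Lipschitz/Comparison-Lemma estimate as in Theorems~\ref{thm:vdot_bound} and~\ref{thm:v_bound}. The only difference is that the paper bounds the inter-execution time by $\max_{k}\{\min\{kT_v,T_k\}\}$ over up to $N$ jumps (your argument is precisely the $k=1$ case, carried out more explicitly), which can sharpen the numerical estimate but adds nothing to the qualitative conclusion.
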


\begin{proof}
Let $e^* \triangleq [e_{\tilde{x}}; e_{x_d}; e_{v^*}]$, where  $e_{v^*} \triangleq c (t - t_i)$ for $t \in [t_i, t_{i+1})$ and each $i$. Then, by \eqref{A:v_jmp_dwell}, $\norm{e} \leq \norm{e^*} + \Big\lceil \frac{|t-s|}{T_v} \Big\rceil J_v$. Now, let $T_k$ be the time it takes $\norm{e^*}$ to grow from zero to $(\Delta_r^\mu - k J_v \norm{M(R_0)})/\norm{L_0}$. Then, a lower bound on the inter-execution times is given by
\begin{equation*}
\max_{k \in \{1,2,\ldots,N\}} \{ \min\{k T_v, T_k\} \}, \quad N = \bigg\lfloor \frac{\Delta_r^\mu}{J_v \norm{M(R_0)}} \bigg\rfloor
\end{equation*}
where $\lfloor . \rfloor$ denotes the floor function. Note that $\{kT_v\}$ is an increasing sequence while $\{T_k\}$ is a decreasing sequence. Following the proof of Theorem \ref{thm:vdot_bound}, $T_k$ is estimated as
\begin{align}
T_k \geq \frac{1}{\norm{L_0}} \log \bigg(1 + \frac{\Delta_r^\mu - k J_v \norm{M(R_0)}}{P_0 + c} \bigg).
\label{eqn:T_case3}
\end{align}
Thus the inter-execution times are lower bounded by a positive constant. The ultimate boundedness of the tracking error follows from Lemma \ref{lem:bounded}.
\end{proof}


\begin{remark}
\label{rem:Li}
Notice from \eqref{eqn:lip_gamma} that in order to compute $L_i = L(\norm{\tilde{x}(t_i)})$ it is necessary to compute the set $S_i = S(\norm{\tilde{x}(t_i)})$ or at least a set of which $S_i$ is a subset, such as $B_i \triangleq \{ \xi: \norm{\tilde{x}} \leq \alpha_1^{-1}(\alpha_2(\norm{\tilde{x}(t_i})), \norm{[x_d; v]} \leq d \}$. However, if $\norm{\tilde{x}(t_i)} \geq \norm{\tilde{x}(t_{i-1})}$ then clearly some components of $L_i$ may be greater than those of $L_{i-1}$. But from Lemma \ref{lem:bounded}, we know that $S_{i} \subset S_{i-1}$ for each $i$, so at time instant $t_i$ instead of computing $L_i$ based on $B_i$, we can let $L_i = L_{i-1}$. Following this rule, the sequence $\{L_i\}$ can be chosen to be component-wise non-increasing. The triggering condition and the estimates of lower bounds on the inter-update times depend critically on $L$ and hence using a time-varying $L$ lowers the overall average update rate. Computing $L$ is in general a computationally costly task and it is not useful to update $L$ continuously in time like $\alpha_3(\norm{\tilde{x}})$ and $\beta(\norm{\tilde{x}})$.
\end{remark}

In the next section our theoretical results are illustrated through simulations of a second order nonlinear system.


\section{Examples and simulation results}
\label{sec:sim}

The theoretical results developed in the previous sections are illustrated through simulations of the following second order nonlinear system.
\begin{align}
\dot{x} =
\begin{bmatrix}
\dot{x}_1\\
\dot{x}_2
\end{bmatrix} =
\begin{bmatrix}
0 & 1\\
0 & -1
\end{bmatrix} x + \begin{bmatrix} 0\\ -x_1^3 \end{bmatrix} + \begin{bmatrix} 0\\ 1 \end{bmatrix} u =
Ax + \begin{bmatrix} 0\\ -x_1^3 \end{bmatrix} + Bu
\label{eqn:nonlin_spring}
\end{align}
The desired trajectory is a solution of the system $[\dot{x}_{d,1}; \dot{x}_{d,2}] = [x_{d,2}; v]$, where $v$ is an exogenous input, which along with the initial conditions of the state of the reference system, $x_d = [x_{d,1}; x_{d,2}]$, determines the specific trajectory. The control function is chosen as 
\begin{align}
\gamma(\xi) = K \tilde{x} + v + (\tilde{x}_1 + x_{d,1})^3 + x_{d,2}
\end{align}
where $K = [k_1; k_2]^T$ is a $2 \times 1$ row vector such that $\tilde{A} = (A+BK)$ is Hurwitz, and $\tilde{x} = [\tilde{x}_1; \tilde{x}_2]$ is the tracking error. Then, the closed-loop tracking error system with event-triggered control can be written, using the measurement error interpretation, as
\begin{align}
&\dot{\tilde{x}}_1 = \tilde{x}_2 \notag \\
&\dot{\tilde{x}}_2 = - (\tilde{x}_2 + x_{d,2}) - (\tilde{x}_1 + x_{d,1})^3 + \gamma(\xi + e) - v.
\label{eqn:xt_nonlin_spring}
\end{align}
Now, consider the quadratic Lyapunov function $V = \tilde{x}^T P \tilde{x}$ where $P$ is a positive definite matrix that satisfies the Lyapunov equation $P\tilde{A} + \tilde{A}^T P = - H$, where $H$ is a given positive definite matrix. The time derivative of $V$ along the flow defined by \eqref{eqn:xt_nonlin_spring} can be shown to satisfy
\begin{align}
\dot{V} &\leq - \tilde{x}^T H \tilde{x} + 2 \tilde{x}^T P B [\gamma(\xi + e) - \gamma(\xi)] \notag\\
&\leq - \sigma a \norm{\tilde{x}}^2 + \beta(\norm{\tilde{x}}) L(R)^T |e|, \quad \forall \xi, (\xi+e) \in S(R)
\end{align}
where $a > 0$ is the minimum eigenvalue of $H$, $\beta(\norm{\tilde{x}}) = 2 \norm{PB} \norm{\tilde{x}}$ and
\begin{equation}
L(R) = \big[ 3(\mu+d_1)^2+|k_1|; |k_2|; 3(\mu+d_1)^2; 1; 1 \big]
\end{equation}
where $\mu = \alpha_1^{-1}(\alpha_2(R))$ and $d_1 \leq d$ is the uniform bound on $x_{d,1}$. If $d_1$ is not known explicitly then $d$ from assumption \eqref{A:xd_v_bound} may be used instead. Note that $B$ has been absorbed in $\beta$ rather than in $L(R)$, as it should have been according to their definitions. This makes the $\beta$ function point-wise lower. The vectors $L_i$ were computed according to the procedure in Remark \ref{rem:Li}. Finally, given a desired ultimate bound for the trajectory tracking error, the parameter $r$ in the triggering condition can be designed. Next, we present simulation results for two cases corresponding to the two main classes of reference trajectories considered in this paper.

\textbf{Case I:} The signals $x_{d,1}$, $x_{d,2}$, and $v$ were chosen as sinusoidal signals with peak-to-peak amplitude $2$. This was done by choosing $[x_{d,1}(0), x_{d,2}(0); v(0)] = [\pi/3; 1; 0]$ and $\dot{v} = - \cos(t)$. The initial condition of the plant was $[x_1(0); x_2(0)] = [5; -1]$. The parameter $d_1$ was chosen as $2.5$ while the actual uniform bounds on $x_{d,1}$ and $\norm{[x_d; v]}$ were observed to be around $2$ and $2.28$, respectively. The parameters in the controller were chosen as $K = -[20; 20]^T$, $\sigma = 0.95$ and $H$ was chosen as the identity matrix. According to Theorem \ref{thm:vdot_bound}, we chose $r = 0.0154$ in the triggering condition to achieve an ultimate bound of $r_1 = 0.1$ in the tracking error.

The simulation results are shown in Figure \ref{fig:nonlin_spring}. The  Figure shows the norm of the tracking error, the radius $r$ in the triggering condition, the desired ultimate bound $r_1$ and $W_i^T|e|$, where $W_i =  (2 \norm{PB} L_i)/(\sigma a)$. The figure demonstrates that the tracking error is ultimately bounded, and well below the desired bound. We recall that according to the triggering condition \eqref{eqn:trig_con}, the control is not updated when $\norm{\tilde{x}} < r$. Hence, as long as $\norm{\tilde{x}} \geq r$, the weighted measurement error, $W_i^T|e|$, is bounded above by the norm of the tracking error, $\norm{\tilde{x}}$, and an event is triggered (control is updated) each time $W_i^T|e| \geq \norm{\tilde{x}}$. However, when $\norm{\tilde{x}} < r$, $W_i^T|e|$ may exceed $\norm{\tilde{x}}$. A zoomed version of the plot in Figure \ref{fig:nonlin_spring} is shown in Figure \ref{fig:nonlin_spring_scale}, where it is clearly seen that the tracking error is only ultimately bounded.
\begin{figure}[!htb]
\centering
\subfloat[Case I]{\label{fig:nonlin_spring}\includegraphics[width=0.33\textwidth]{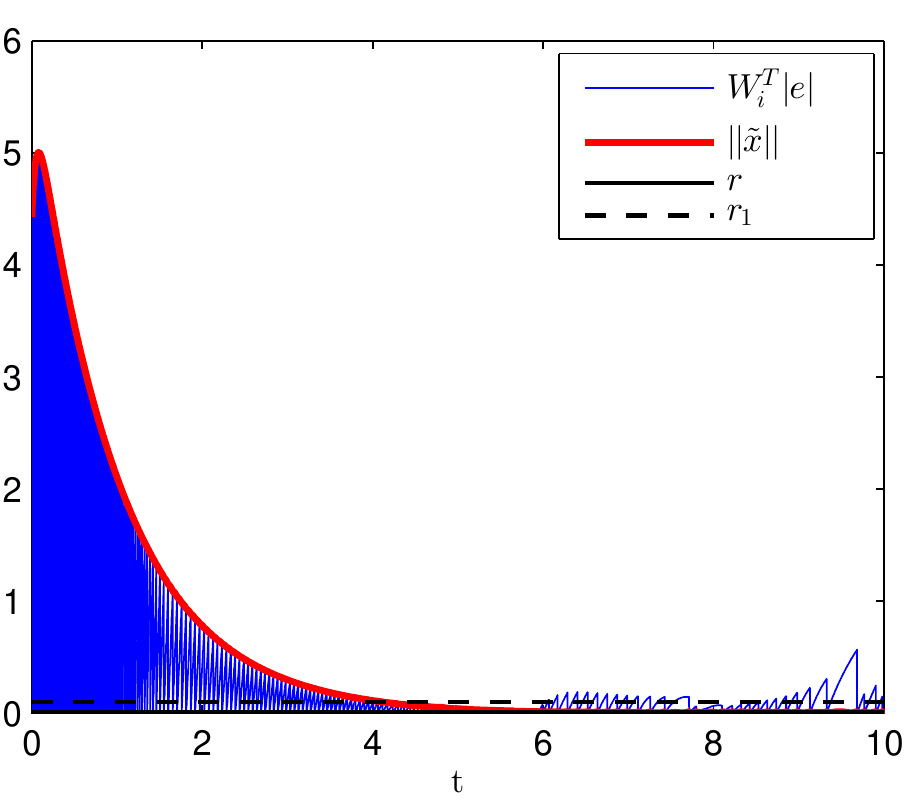}}
\subfloat[Case II]{\label{fig:vquant_nlin_spring}\includegraphics[width=0.33\textwidth]{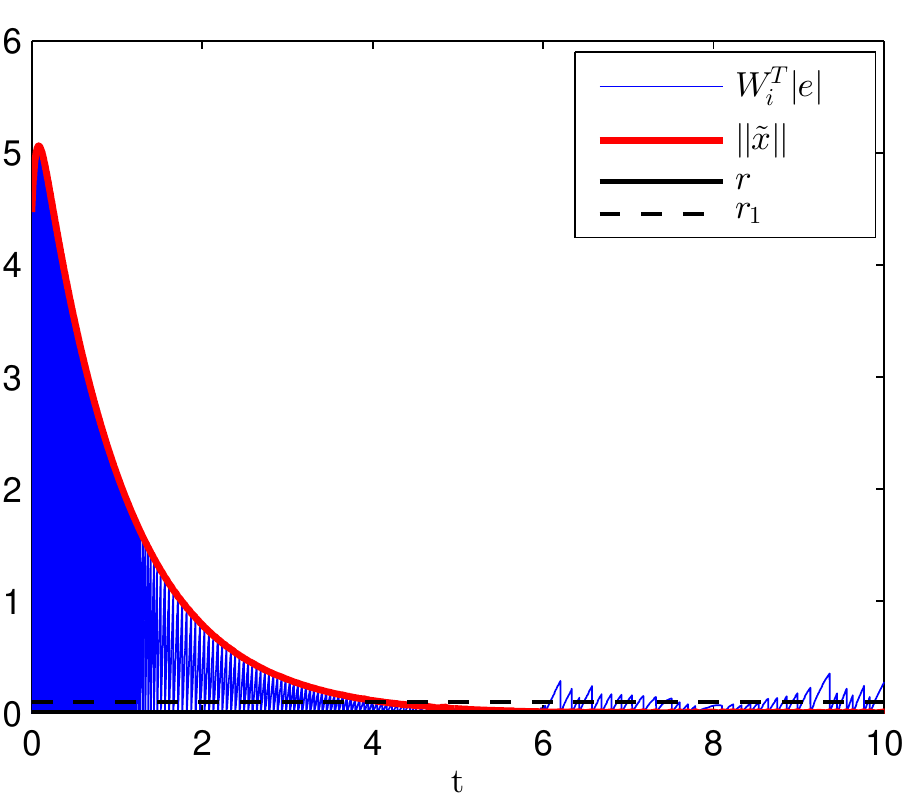}}
\subfloat[Case I (zoom)]{\label{fig:nonlin_spring_scale}\includegraphics[width=0.33\textwidth,height=5.25cm]{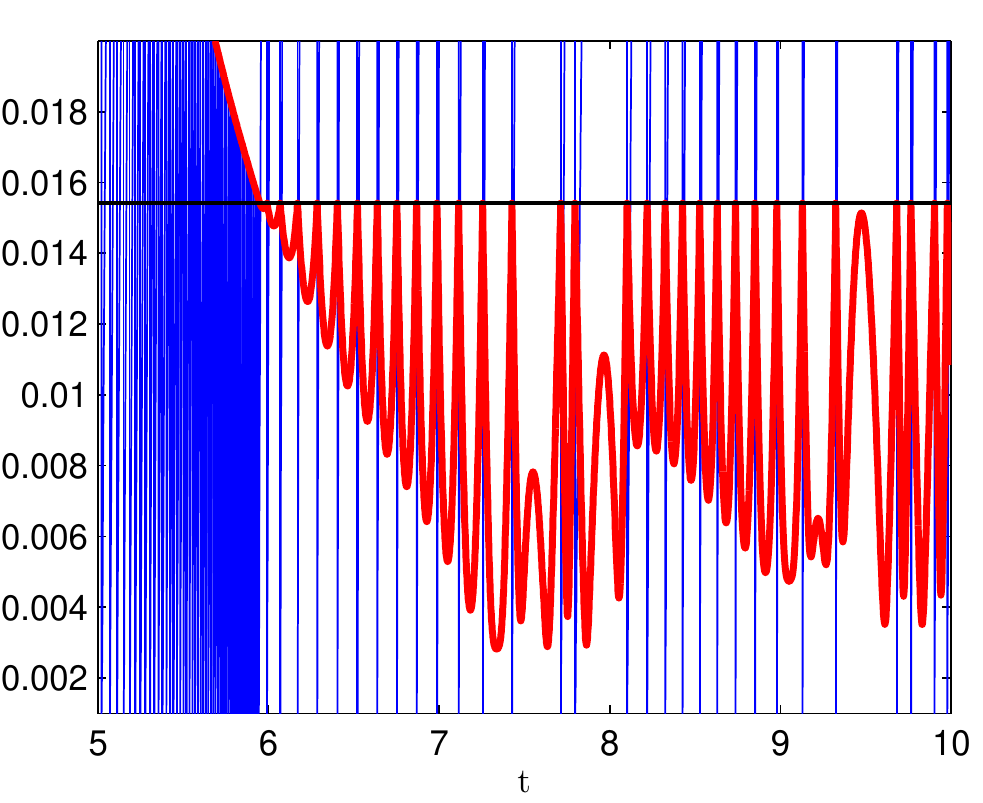}}
\caption{Simulation results for the two reference trajectories.}
\end{figure}

The number of control executions in the simulated time duration was $301$, and the minimum inter-execution time was observed to be $0.005 \text{s}$. The observed average frequency of control updates was around $30\text{Hz}$. Since most of the updates occur before $\tilde{x}$ first enters the ball of radius $r$, it is important to also consider the average frequency for this time period, and in this simulation it was found to be around $46\text{Hz}$. If $L$ is kept constant then these average frequencies are much higher at $943$Hz and $1586$Hz, respectively, with almost no change in the rate of convergence. The theoretical estimate of the minimum inter-execution time is around $6 \times 10^{-8} \text{s}$, which is orders of magnitude lower than the observed value.

\textbf{Case II:} In this case the result in Theorem \ref{thm:v_jump_dwell} is illustrated, where the input signal $v$ is piecewise continuous. In the simulations it was defined as the piecewise constant function, taking values in the set $\mathcal{Q} = \{ 0, \pm 0.1, \pm 0.2, \ldots \}$ and defined as
\begin{equation*}
v(t) = \argmin_{k \in \mathcal{Q}} \{ |- \sin(t) - k|\}
\end{equation*}
For the time instants when $(- \sin(t))$ equals an odd multiple of $0.05$, $v(t)$ is chosen as the higher or the lower of the two possible values based on whether the time derivative of $(- \sin(t))$ is positive or negative, respectively. In the context of Theorem \ref{thm:v_jump_dwell}, the constants $c = 0$ and $J_v = 0.1$.

The initial condition of the reference system was $[x_{d,1}(0); x_{d,2}(0); v(0)] = [1; 1.003; 0]$. From Theorem \ref{thm:v_jump_dwell}, we know that $\Delta_r^\mu$ has to be greater than $J_v = 0.1$, which implies that $r$ has to be greater than $0.0075$. For the example system here, $R_0$ in Theorem \ref{thm:v_jump_dwell} can assume any value. Thus, as in CASE I, $r = 0.0154$ was chosen. The rest of the parameters were the same as in Case I. Figure \ref{fig:vquant_nlin_spring} shows the simulation results. The number of control updates were observed to be $304$, with the minimum execution time at around $0.005$s. The observed average frequencies of control updates were found to be around $30\text{Hz}$ and $46\text{Hz}$ for the simulated time duration and the time duration that $\tilde{x}$ takes to first enter the ball of radius $r$, respectively. These average frequencies are comparable to those in Case I. The theoretical estimate of the minimum inter-execution time is around $3 \times 10^{-8} \text{s}$, which again is very conservative.

\section{Conclusions}
\label{sec:conc}

In this paper, we developed an event based control algorithm for trajectory tracking in nonlinear systems. Using three main results, it was demonstrated that given a nonlinear dynamical system, and a continuous-time controller that ensures uniform asymptotic tracking of the desired trajectory, an event based controller can be designed that not only guarantees uniform ultimate boundedness of the tracking error, but also ensures that the inter-execution times for the control algorithm are uniformly bounded away from zero. The first result demonstrated that uniform boundedness with an arbitrary ultimate bound for the tracking error can be achieved, provided the reference trajectory, the exogenous input to the reference system, and its derivative are all uniformly bounded. However, the minimum \textit{guaranteed} inter-execution time decreases along with the ultimate bound. In the second and third results, we relaxed the assumption on the derivative of the input to the reference system, and demonstrated that the tracking error is uniformly ultimately bounded. In these cases, the analytical results show that it may not be feasible to reduce the ultimate bound below a certain threshold and moreover, the result is only local in general.

The theoretical results were demonstrated through simulations of a second order nonlinear system. The theoretical lower bounds on inter-update times have been found to be very conservative. The reason for this is partially due to the fact that the estimates are based on the rate of change of $\norm{e}$ (made necessary by the presence of exogenous signals) rather than that of $\norm{\tilde{x}}/\norm{e}$ as in \cite{tabuada2007}. Thus, there is significant room for improvement in these estimates and how they are computed. Numerical simulations indicated that the ultimate bound on the tracking error is much lower than the desired value, which is another area for improvement of the theoretical predictions. Finally, it is important to extend these results to output feedback systems.

\section{Acknowledgements}
The authors thank anonymous readers for their helpful comments.


\bibliographystyle{IEEEtran}
\bibliography{IEEEabrv,../Bib/control_refs}

\end{document}